\author{Sachin Kadloor and Nandakishore Santhi}
\title{Understanding Cascading Failures in Power Grids}
\newcommand{\ml}{\mathcal{L}}
\newcommand{\md}{\mathcal{D}}
\newtheorem{theorem}{Theorem}[section]
\newtheorem{lemma}[theorem]{Lemma}
\newenvironment{proof}[1][Proof]{\begin{trivlist}
\item[\hskip \labelsep {\bfseries #1}]}{\end{trivlist}}
\newcommand{\qed}{\nobreak \ifvmode \relax \else
      \ifdim\lastskip<1.5em \hskip-\lastskip
      \hskip1.5em plus0em minus0.5em \fi \nobreak
      \vrule height0.75em width0.5em depth0.25em\fi}
\begin{document}
\date{}
\maketitle
\centerline{Draft of \date{\today}}
\begin{abstract}
In the past, we have observed several large blackouts, i.e. loss of power to large areas. It has been noted by several researchers that these large blackouts are a result of a cascade of failures of various components. As a power grid is made up of several thousands or even millions of components (relays, breakers, transformers, etc.), it is quite plausible that a few of these components do not perform their function as desired. Their failure/misbehavior puts additional burden on the working components causing them to misbehave, and thus leading to a cascade of failures. 

The complexity of the entire power grid makes it difficult to model each and every individual component and study the stability of the entire system. For this reason, it is often the case that abstract models of the working of the power grid are constructed and then analyzed. These models need to be computationally tractable while serving as a reasonable model for the entire system. In this work, we construct one such model for the power grid, and analyze it.
\end{abstract}

\section{Introduction}
We model the power grid as a network of nodes with edges. The edges represent generators serving a load; an edge between two nodes represents a logical connection between the nodes, representing an agreement between the nodes to share each others' loads in case of one of the nodes failing. The event of a random number of components of the power grid failing is modeled as a random disturbance causing some of the nodes in the network to fail. The loads at these nodes is shared by their neighbors. It could so happen that this additional burden causes some of these neighboring nodes to fail, causing an additional burden on the remaining nodes, and so on. We study the robustness of such system, i.e., what disturbance levels the system can accept before a failing node (or a few failing nodes) would result in the failure of all the components in the system resulting in a large blackout.

\section{System Model}\label{sec:sysmodel}
Our system model consists of a graph $G(V,E)$, with a node $v\in V$ representing a generator. Associated with each node $v$ is a number $L_v(n)$ representing the load demand at that node at time $n$. Also, associated with each node is a capacity $L_{max}$. A node is said to be alive as long as the load demand at that node is within its capacity, i.e., as long as $L_v(n)<L_{max}$ that node is alive. If the load demand at a node exceeds its capacity, then it fails. The edges on the network, $E$, represent a link between the generators with the following interpretation: If a node fails, then the load served by that node is equally shared by all the nodes connected to it. Consider the 5 nodes, nodes A through E, in Figure \ref{fig:network}. The nodes represent generators located at various distant locations. The yellow regions around each of the nodes represent the area of coverage of these generators. Depending on the area of coverage, the `steady state' load at different generators could be different. Each generator is assumed to generate sufficient power to feed all the customers' demands. 
\begin{figure}[ht]
\centering
\includegraphics[width=0.6\columnwidth]{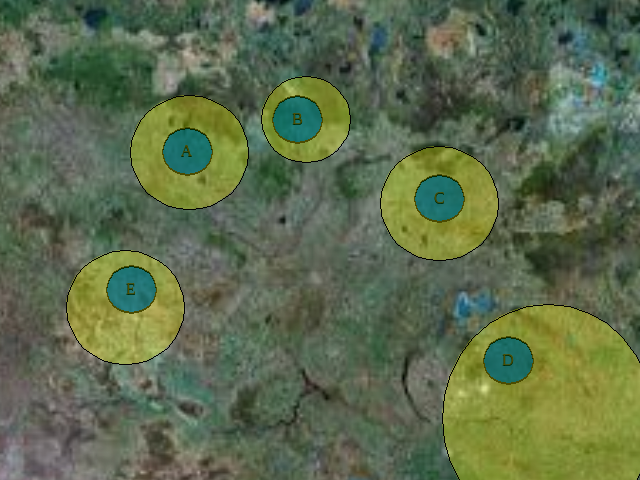}
\caption{Generators, in blue, located at different geographical locations, with yellow regions representing the area of coverage of each of these generators.}
\label{fig:network}
\end{figure}

However, occasionally, a few of these generators fail. This could either be because of failure of the generator itself, or because the load increases to a level greater than the generation capacity of the generator, or because of a natural calamity taking down the power lines which connect the generator to the power grid. Any of these events would result in a blackout to all the customers being served by the failing generator. To bolster the power grid, one could conceive of agreements between different nodes (probably belonging to different utility companies). Links are formed between nodes with the interpretation that the load at a failing node is equally distributed among its neighbors.

We start out by considering the case when every node in the network is assumed to have the same maximum capacity $L_{max}$ of one unit. The load at each node is assumed to be a constant number $a_0\in(0,1)$. Hence, $L_v(0)\sim \delta(l-a_0), \forall v\in V$, i.e., if $L_v(0)$ were to be thought of as a random variable, it would have a distribution $\delta(l-a_0)$. The assumptions about the initial distribution are not required for our analysis, but the resulting equations are easier to analyze. We will consider other initial distributions later. 

At time $0$, a disturbance $D_v(0)$ is added to node $v$. As discussed before, a variety of events could trigger the chain of cascading failures. We model this initial trigger as an abrupt increase in the consumers' load demand. The disturbance $D_v(0)$ is modeled as a random variable with distribution $Exponential(d_m)$, with a mean value of $d_m<1$. The disturbances at various nodes are independent of each other. As a result of this disturbance, the load at node $v$ is now $L_v(0)+D_v(0)$. At time $1$, all those nodes for which $L_v(0)+D_v(0)>1$ fail. All the neighboring alive nodes share the load of the failing nodes. If node $i$ fails, and suppose it is connected to nodes $r$ and $s$, which have not yet failed,  then each of these two nodes share the load of node $i$, that is, the load at node $r$ at time $1$ is $L_r(1)=L_r(0)+D_r(0)+\frac{L_i(0)+D_i(0)}{2}$ and the load at node $s$ at time $1$ is $L_s(1)=L_s(0)+D_s(0)+\frac{L_i(0)+D_i(0)}{2}$. Now, if $L_r(1)=L_r(0)+D_r(0)+\frac{L_i(0)+D_i(0)}{2}>1$, then node $r$ fails, and all of its load is shared among the nodes connected to it, and so on. Note that if no nodes fail at time $n$, no other nodes will fail subsequently.



\paragraph*{Large Fully Connected Networks:}
We now limit our attention to large fully connected networks. The fully connected assumption means that every node in the network has agreed to share the load of every other node in the event of a failure. From a robustness point of view, intuitively, fully connected graphs should be the most resilient to disturbances. This intuition was confirmed partly by the following simulation. Refer Appendix \ref{app:simdesc} for details on the simulation. We considered random graphs with varying probability of the presence of an edge. Random graphs are graphs in which any two nodes are connected with a certain probability. The presence or the absence of an edge between two nodes in the graph is independent of the presence or absence of any other edge. As described in the previous section, we ran the simulation and computed the fraction of the simulations in which there was no outage\footnote{A slight deviation from the model described earlier is that for this simulation we assumed that the initial distribution of loads is $Uniform[0,1]$ instead of $\delta(l-a)$.}. The results are shown in Figure \ref{fig:p_outage}. It is evident from the plot that when there are more number of connections between the nodes, there is a smaller probability that there is any outage in the region. Also, when the total number of nodes in the network is large, the improvements in going from weakly connected to fully connected are higher. We hence assume that most networks designed should be fully connected. Recall that by fully connected, we do not necessarily mean that there is a physical wired connection between any two generators, but only an agreement between the nodes that they would share each others' loads in the event of a failure. 

\begin{figure}[ht]
\centering
\includegraphics[width=0.6\columnwidth]{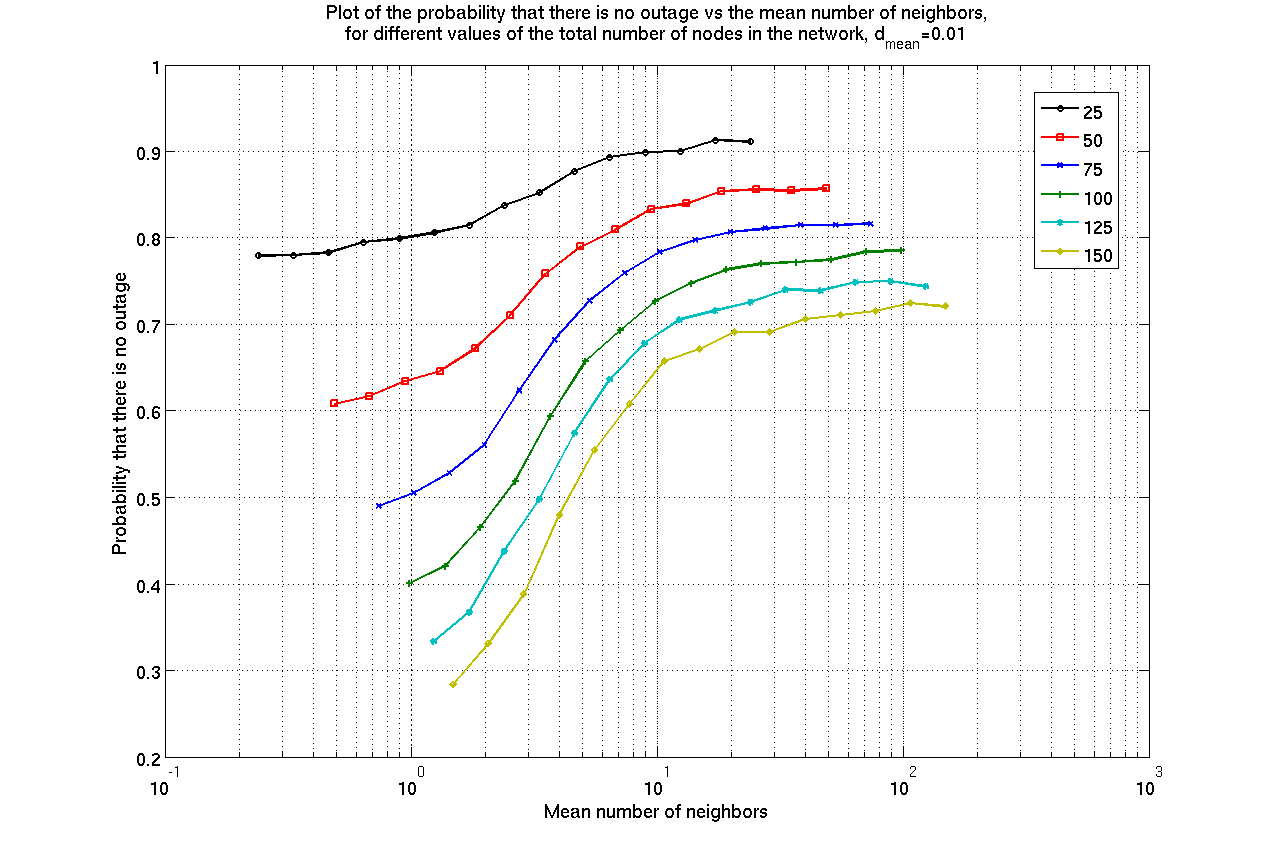}
\caption{The probability that there is no outage increases with the connectivity in the network.}
\label{fig:p_outage}
\end{figure}

Note that we say the results in Figure \ref{fig:p_outage} only partially confirm that fully connected graphs are the best in terms of providing resilience to disturbances because, the network resilience could be measured using different metrics. The metric we used to make that claim is the probability that there are no outages. We could also measure this resilience using another metric, on an average, the fraction of the population which goes into an outage as a result of the disturbance. Figure \ref{fig:fraction_outage} shows the fraction of population which goes into an outage in the event of a disturbance. The reason for this behavior is the following. The fully connectedness of the graph ensures the following: Every node is an immediate neighbor of every other node. Hence, if there are nodes which have survived the disturbance, then they will supply power to all the regions being served by the failed nodes, resulting in no outage in any of the regions. However, if all the nodes end up failing, then the entire region goes into an outage. \emph{When the graph is fully connected, either the entire region survives the disturbance, or all of it goes into an outage}. If the graph were not fully connected, when a few of the nodes fail, it could divide up the region into islands of disconnected nodes which could individually survive or fail. 

This behavior is seen in Figure \ref{fig:distribution}, where we plot the results of individual experiments. When there are few edges in the graph, there are many `small' outages, when the edges are many, there are few, but `large' outages. It is hence not very clear if fully connectedness of the graph is indeed the best to have. From a customer's point of view, large graphs are indeed the best as they have the least probability of an outage. A customer is not bothered if there is an outage only in his locality or the entire region. From the utility company's point of view, if the graph is fully connected, there is a higher chance that the entire region goes into an outage. In such an event, the time required by the utility company to fix all the generators (by replacing the defective components, re-setting the circuit breakers, etc.) will be larger compared to fixing only a few nodes. To come up with possibly other metrics of resilience and to decide on what degree of connectivity is best is a study in itself (we have considered only random graphs, perhaps a structured graph with low degree of connectivity could offer best of both extremes), and we explore this question no further. We just concentrate on fully connected graphs.  

By a large graph, we mean that we will study the behavior of network in the limit that the number of nodes goes to infinity. This simplifies the analysis greatly. Also, as we shall see, the behavior of a network with finite, but large number of nodes is well approximated by the behavior of the large network.  

\begin{figure}[ht]
\centering
\includegraphics[width=0.6\columnwidth]{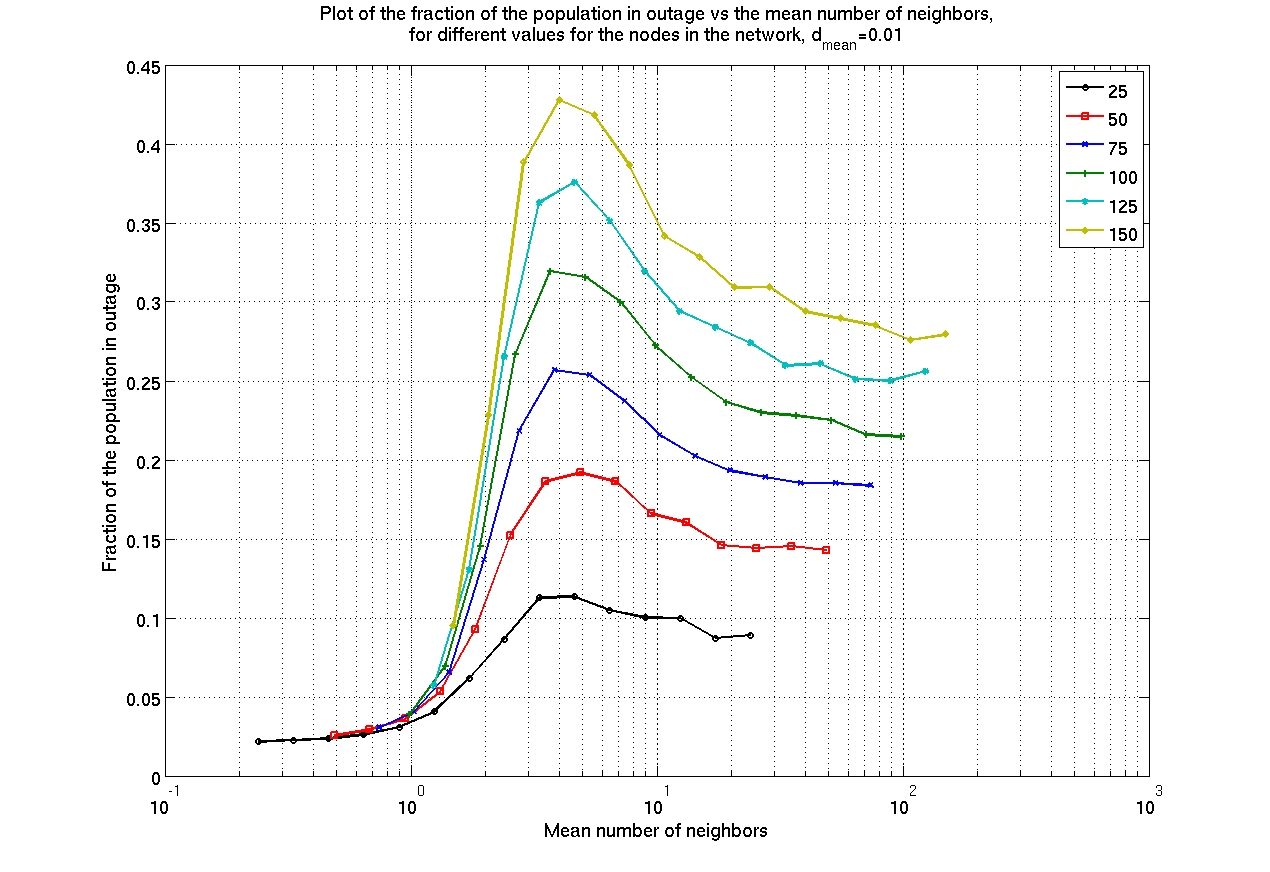}
\caption{The fraction of the population which goes into an outage, averaged over all the simulations.}
\label{fig:fraction_outage}
\end{figure}

\begin{figure}[ht]
\centering
\includegraphics[width=0.6\columnwidth]{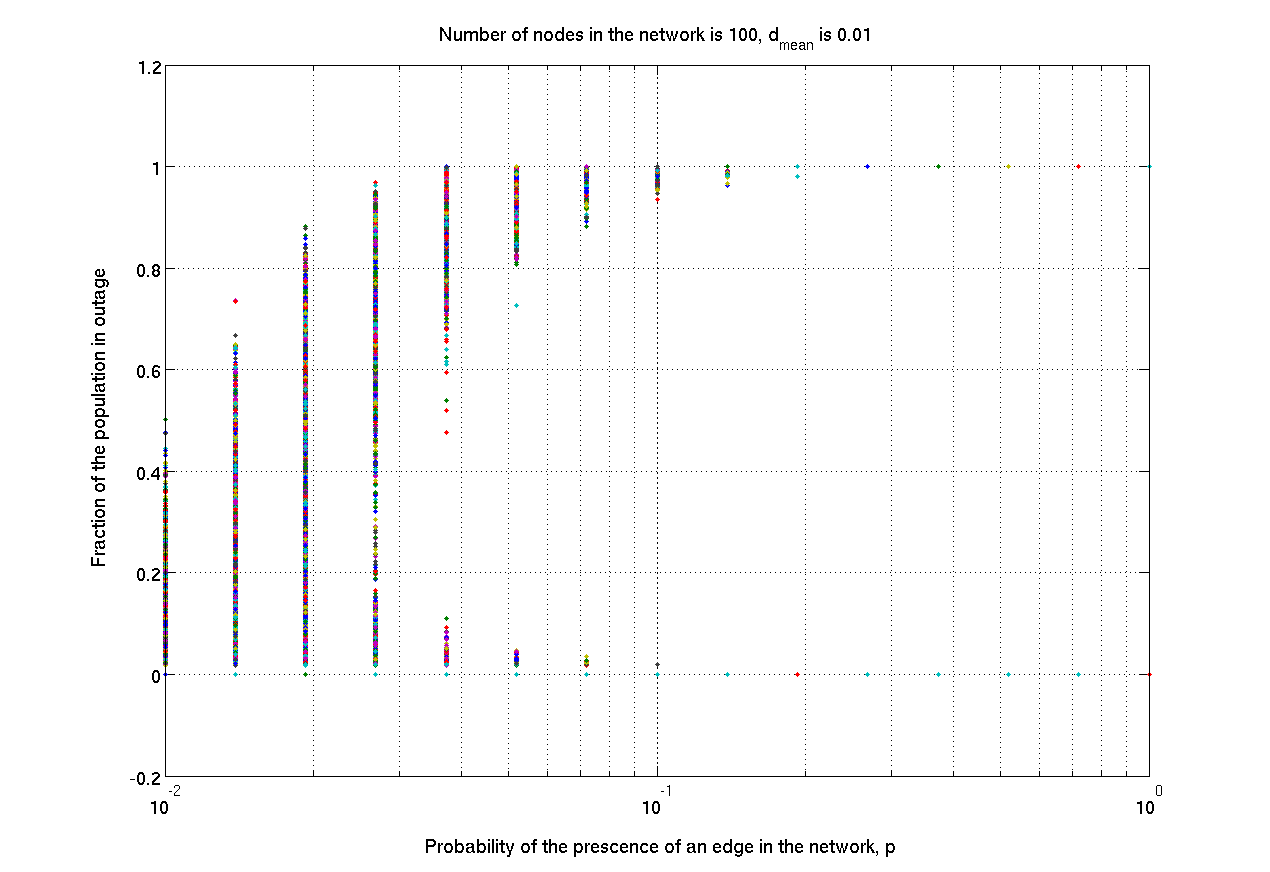}
\caption{Results of individual experiments plotted for the case when there are 50 nodes in the network, and $d_m$=0.1.}
\label{fig:distribution}
\end{figure}

\paragraph*{Main Result of the paper:} 
\emph{For a fully connected large graph, we will show that there is a threshold value $d_{critical}$ with the property that, when $d_m < d_{critical}$, the network survives the disturbance with probability 1, i.e., no region goes in an outage; and when $d_m > d_{critical}$, all the nodes in the network fail with probability 1 leading to a complete outage.} 

\section{Related Works}
Different works have considered different failure models and have analyzed them. Some of these works specifically study cascading failures in power grids whereas others study more general networks, e.g.~social networks, Internet, food chains, etc.. In most of these networks, there is some notion of a failure of a node, a computer connected to others being infected by a virus, people in social networks framing opinions on whom to vote for, etc.. Some of the failure models which are studied are: 
\begin{itemize}
\item \textbf{Each failing nodes increases the load at every other node uniformly: \cite{Dobson04_simple}} Like in the system model described above, there are nodes which form the network, each with a tolerance level of $l_{max}$. However, in this model, the underlying topology of the nodes is ignored. The initial disturbance causes a few of the nodes to fail. When any node fails, the load at every other active node is uniformly increased by a constant load $p$. In this work, the authors study the fraction of the nodes which survive a disturbance, and show that their distribution follows a quasi-binomial distribution.
\item \textbf{Each failing node takes down with it a random number of nodes: \cite{Dobson04_branching}} In this model, the network topology is again ignored. Initially each node is active. Then one of them is assumed to fail. This failing node results in the failure of $M_1$ nodes, where $M_1$ is a random variable distributed according to a p.m.f.~$P_F(\cdot)$. Each of these failed nodes cause a further failure of a certain nodes. At stage $i$, if $M_i$ nodes have failed, then, the total number of failed nodes at stage $M_{i+1}$ is $$ M_{i+1} = M_{i+1}^{1} + M_{i+1}^{2} + \ldots M_{i+1}^{M_i}, $$ where $M_{i+1}^{k}$ is the number of nodes which fail as a result of the failure of the $k^{th}$ failure at time $i$. $M_{i+1}^{k}, k=1,2,\ldots,M_i$ are all assumed i.i.d.~according to the same p.m.f.~$P_F(\cdot)$. The authors study the distribution of $\sum_{k=0}^{\infty} M_k$. The behavior of $M$ is governed by the mean of the number of failures which result from the failure of a node, i.e., $\lambda=\sum_{f=1}^{\infty}fP_F(f).$ When $\lambda<1$, Pr$(M<\infty)=1$. When $\lambda>1$, Pr$(M<\infty)=0$. In the case when $\lambda=1$, the authors show that the total number of failed nodes follows a power law, i.e., Pr$(M=r)\propto r^{-\frac{3}{2}}$. 
\item \textbf{Nodes which fail if a fraction of their neighbors fail : \cite{Watts_random}} In this model, each node is assumed to be in one of the two states, active or failed. Each node fails if a fraction $\phi$ of its neighbors fail. The authors study the behavior of ``random networks''. The network is constructed out of $n$ nodes, where each node has $k$ neighbors, where $k$ is an integer-valued random variable with distribution $P_K$, and mean $z \ll n$. All the nodes in the network are initially in active state. At time $1$, a fraction $\Phi_0 \ll 1$ of the nodes change their state. The system then evolves as discussed above. The authors relate this problem to \emph{percolation theory}, and find the stability region, i.e., the pairs $(z,\phi)$ for which there is no cascading failure. This is done for the case when all the nodes in the system have the same threshold $\phi$. 
\item \textbf{Drop in efficiency of a network because of an imbalance in flow distribution: \cite{Marchiori_physical,Marchiori_efficiency}} In this model, all the edges in the graphs have weights associated with them. The weight represents the cost of using that link. Furthermore, each node also has a weight associated with it, representing its capacity. For example, the nodes could represent routers in a network, the weight of a router represents the maximum number of packets it can process per second. The links represent the topology of the network and the link weights are the costs associated with transmitting a packet on that link. In this model, every node sends packets to every other node in the network at a constant rate of one packet per second. \emph{The packets are routed through the path with the smallest cost. If the path includes multiple links, the cost of the path is the sum of the costs along individual links.} The load $l_i(t)$ on node $i$ at time $t$ is the total number of flows through it, i.e., the total number of most efficient paths through node $i$. The capacity of node $i$ is $C_i=\alpha l_i(0), \alpha \geq 1$. The system is stable at time 0 because the total flow through each node is less than its capacity. At time 1, one randomly chosen node is removed from the network. All the flows through that node are now re-routed. This is done by re-calculating the most efficient paths for these flows. This might lead to an overload at a few of the nodes. When this happens, the weights of all the edges emerging from an overloaded node are increased, i.e., the cost of routing a packet through an overloaded node is increased. When the costs associated with the links change, the most-efficient paths are re-calculated. This re-routing will cause overload at other nodes, leading to an increase in the cost of routing packets through these nodes, followed by a re-calculation of the most efficient path, and so on. The efficiency of the network at each time is the total cost incurred in routing all the flows. Note that in this model, only a single node fails at time 0. After that, no further nodes fail, however, the efficiency of the network keeps decreasing because of  increasing link costs. A cascading failure occurs when there is a substantial drop in the efficiency of the network. The authors study two graphs, \emph{Erdos-Renyi random graphs}, and \emph{scale-free networks}, and show in both the cases that a single node failing can lead to a cascading loss in the efficiency of the network when $\alpha$, the excess capacity of a node, is close to one.

In a similar work, \cite{Farina_optimization}, the authors consider source nodes, destination nodes, and routing nodes. The links between them represent power lines with associated capacities. The source nodes have a weight representing maximum generation capacity. The destination nodes have a weight representing their power demand. Routing of power from the source nodes to the destination nodes is done so as to meet the constraints on maximum power that can be transmitted along each power line, maximum generation capacity of the generators, and the demand at the loads. Then, the demand of each load is perturbed. Now, the paths are re-calculated, and overloaded lines are cut off. This leads to a re-calculation of the best path. This process is continued iteratively. If all the edges leading to a destination node are cut off, then that node is in outage. The authors use simulation to study how many nodes go in outage.

\end{itemize}

\section{Analysis of failures in the network}
For the case of large fully connected graphs, we will introduce some additional notation. We will define $\mathcal{L}_0$ to be a random variable whose distribution is the same as the distribution of a node which has not failed in stage 0, i.e. $\mathcal{L}_0 \sim \delta(l-a_0)$. More generally, $\mathcal{L}_n$ is a random variable with a distribution same as the distribution of the load at all the nodes which are alive at stage $n$. We will show later that the load at all surviving nodes at any stage is identically distributed. Also, define $\mathcal{D}_n, n\geq1$ to be the `additional disturbance' at stage $n\geq1$. We will make this definition more concrete later on. Define $\mathcal{D}_0$ to be a random variable with distribution $Exponential(d_m)$. 

The initial load at each node at stage 0 is assumed to have the same distribution, of that of $\ml_0$, and the added disturbance at each node is also independent and identically distributed according with the distribution of $\md_0$. Hence, the distribution of the initial load plus the disturbance at each node is also independent and identically distributed with the distribution of $\ml_0+\md_0$. Denote by $p_0$, probability that a node fails at the end of stage 0, i.e. $p_0=Pr(\ml_0+\md_0>1)$. Denote by $N_0$, the number of nodes which fail at stage 0. $N_0$ is a random number: $N_0\sim Binomial(N,p_0)$, where $N$ is the total number of nodes (we are interested in studying the behavior of the system when $N$ is large). Let $k_1,k_2,\ldots,k_{N_0}$ be the indices of nodes which have failed in stage 0. Let $j_1,j_2,\ldots,j_{N-N_0}$ be the indices of the nodes which have not failed at stage 0. Then, 
$$L_{j_i}(1) = L_{j_i}(0)+\frac{1}{N-N_0}(L_{k_1}(0)+L_{k_2}(0)+\ldots+L_{k_{N_0}}(0)), i\in\{1,2,\ldots,N-N_0\}. $$

Knowing the distributions of all the various random variables involved in the above expression, we can find the distribution of $L_{j_i}(1)$. The distributions of the various random variables involved in the above expression can be written as follows:
\begin{enumerate}
\item $L_{j_i}(0) : f_{L_{j_i}(0)}(x) = f_{\ml_0+\md_0}(x|\ml_0+\md_0<1)$.
\item $L_{k_i}(0) : f_{L_{k_i}(0)}(x) = f_{\ml_0+\md_0}(x|\ml_0+\md_0\geq 1)$.
\item $N_0 : N_0\sim Binomial(N,p_0)$.
\end{enumerate}  

Define $\mu_0 \doteq \mathbf{E}(L_{k_i}(0)) = \int_{x=1}^{\infty} x f_{\ml_0+\md_0}(x|\ml_0+\md_0\geq 1) dx  $. We can then express the distribution of $S_{N_0}=\frac{1}{N-N_0}(L_{k_1}(0)+L_{k_2}(0)+\ldots+L_{k_{N_0}}(0))$ as follows. We are interested in the distribution in the limit that the number of nodes, $N$ goes to infinity. 
$$\lim_{N\to \infty} S_{N_0} = \lim_{N \to \infty} \frac{N_0/N}{1-N_0/N}\frac{L_{k_1}(0)+L_{k_2}(0)+\ldots+L_{k_{N_0}}(0)}{N_0}. $$

\begin{lemma}\label{lem:slln}
If $\mu_0<\infty$, $\lim_{N\to \infty} S_{N_0} = \frac{p_0}{1-p_0} \mu_0$
\end{lemma}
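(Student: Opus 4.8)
The plan is to reduce the claim to two applications of the strong law of large numbers (SLLN), by re-expressing everything in terms of the underlying i.i.d.\ sequence of node loads rather than in terms of the random index $N_0$. Let $X_1,X_2,\dots$ be i.i.d.\ copies of $\ml_0+\md_0$, one attached to each node, and set $Y_m \doteq X_m\,\mathbf{1}\{X_m>1\}$. Then $N_0=\sum_{m=1}^N \mathbf{1}\{X_m>1\}$ is a sum of $N$ i.i.d.\ $Bernoulli(p_0)$ variables, and the numerator appearing in $S_{N_0}$ is \emph{exactly} $L_{k_1}(0)+\dots+L_{k_{N_0}}(0)=\sum_{m=1}^N Y_m$. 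The point of this bookkeeping is that $\mathbf{E}[Y_m]=\mathbf{E}[X_m\mid X_m>1]\,\Pr(X_m>1)=p_0\mu_0$, which is finite precisely under the hypothesis $\mu_0<\infty$. I would also record at the outset that $p_0\in(0,1)$: since $\ml_0\equiv a_0$ with $a_0\in(0,1)$ and $\md_0\sim Exponential(d_m)$ has full support on $(0,\infty)$, we get $p_0=\Pr(\md_0>1-a_0)\in(0,1)$, so both $N_0\to\infty$ and the finiteness of $p_0/(1-p_0)$ are guaranteed.

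Next I would apply the SLLN to $\{Y_m\}$ to get $\tfrac1N\sum_{m=1}^N Y_m \to p_0\mu_0$ almost surely, and to $\{\mathbf{1}\{X_m>1\}\}$ to get $N_0/N\to p_0$ almost surely, hence $\tfrac1N(N-N_0)\to 1-p_0$ almost surely. Writing
$$ S_{N_0}=\frac{\sum_{m=1}^N Y_m}{N-N_0}=\frac{\tfrac1N\sum_{m=1}^N Y_m}{\tfrac1N(N-N_0)}, $$
the denominator converges a.s.\ to $1-p_0>0$, so by the algebra of almost-sure limits $S_{N_0}\to \tfrac{p_0\mu_0}{1-p_0}$ almost surely, which is the claim. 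An equivalent route, staying closer to the decomposition written just before the lemma, is to note $N_0/N\to p_0$ and hence $\tfrac{N_0/N}{1-N_0/N}\to\tfrac{p_0}{1-p_0}$ a.s., and separately that $\tfrac1{N_0}\sum_{i=1}^{N_0}L_{k_i}(0)\to\mu_0$ a.s.\ by applying the SLLN along the random index $N_0$ (which is licit because $N_0\to\infty$ a.s.); multiplying the two limits gives the same conclusion.

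The one genuinely subtle point — and the step I would be most careful with — is the dependence between the number of summands $N_0$ and the summands $L_{k_i}(0)$ themselves: these are \emph{not} independent, so one may not simply treat the numerator as a sum of $N_0$ i.i.d.\ terms with $N_0$ held fixed. The clean fix is exactly the reformulation above: do not condition on $N_0$ at all, but recognize the numerator as the ordinary partial sum $\sum_{m=1}^N Y_m$ of the unconditioned i.i.d.\ sequence $\{Y_m\}$, to which the SLLN applies verbatim. If one instead prefers the random-index formulation, the fact needed is the standard one that $a_n\to a$ together with $k_n\to\infty$ (a.s.) implies $a_{k_n}\to a$ (a.s.). A final housekeeping remark I would include is the mode of convergence meant here: the natural statement is almost-sure convergence, which — the limit being deterministic — also gives convergence in probability and in distribution, and it is this that is used subsequently to pin down the distribution of $\mathcal{L}_1$.
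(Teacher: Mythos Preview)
Your argument is correct and, in fact, cleaner than the paper's. The two proofs agree on the first step --- the SLLN gives $N_0/N\to p_0$ a.s., whence $\frac{N_0/N}{1-N_0/N}\to\frac{p_0}{1-p_0}$ a.s. --- but diverge in how they handle the numerator $\sum_{i=1}^{N_0}L_{k_i}(0)$. The paper keeps the random-index form: it argues, via a Gaussian tail bound coming from the CLT, that $\Pr(N_0\ge N')\to1$ for every fixed $N'$, concludes that $N_0\to\infty$ with probability $1$, and then invokes the (weak) law of large numbers along the random index to get $\tfrac{1}{N_0}\sum L_{k_i}(0)\to\mu_0$ in distribution; multiplying the two limits gives the result in distribution. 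Your reformulation sidesteps the random index entirely by rewriting the numerator as the ordinary partial sum $\sum_{m=1}^N Y_m$ with $Y_m=X_m\mathbf{1}\{X_m>1\}$ and $\mathbf{E}[Y_m]=p_0\mu_0<\infty$, so a single application of the SLLN to $\{Y_m\}$ and the algebra of a.s.\ limits yields a.s.\ convergence directly. This buys you a stronger mode of convergence and avoids both the informal CLT step and the need to justify combining an a.s.\ limit with a convergence-in-distribution limit (which in the paper's argument is really a Slutsky step). Your careful remark about the dependence between $N_0$ and the $L_{k_i}(0)$, and why the $Y_m$ bookkeeping dissolves it, is exactly the point the paper's proof glosses over.
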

\begin{proof}
 Refer Appendix \ref{app:prooflem} for a rigorous proof. Informally, when $N$ is large, the strong law of large numbers dictates that $\lim_{N\to\infty} N_0/N=p_0$. This is because $N_0$ can be thought to be a sum of i.i.d.~Bernoulli random variables with mean $p_0$, and the empirical mean of a large number of i.i.d.~random variables behaves like the statistical mean. Next, when $N$ is large, $N_0$ can also be shown to be very large, in which case, the second term in the expression also behaves like the statistical mean of the random variables involved. \qed
\end{proof}

The fact that $S_{N_0}$ converges to a constant number, and not a random variable is important. This is because the set of random variables $\{L_{j_i}(1)\}$ continue to be independent random variables, which would not be the case otherwise\footnote{If variables $X_1$ and $X_2$ are independent, then $X_1+S$ and $X_2+S$ are independent if $S$ is a constant, but are not if $S$ is a random variable.}. This simplifies the analysis greatly. Moreover, the set of random variables $\{L_{j_i}(1)\}$ are all identically distributed. 

Define $\md_1=\frac{p_0}{1-p_0}\mu_0$. Define $\ml_1$ to be a random variable with the distribution of all the nodes which are alive at the end of stage 0, i.e., $\ml_1\sim f_{L_{j_i}(0)}(l|L_{j_i}(0)<1)$. The nodes which die at the start of stage 1 are those nodes for which $\ml_1+\md_1>1$. Define $p_1\doteq Pr(\ml_1+\md_1>1)$. Because we started out with a large number of nodes, the number of nodes which are alive at the start of stage 1 is also large. Similar to the case before, a few nodes die at the end of stage 1, and their load is redistributed among those which are alive. Define $\mu_1\doteq \int_{x=1}^{\infty} x f_{\ml_1+\md_1}(x|\ml_1+\md_1\geq 1) dx$. It can be shown, by similar arguments as before, that the total re-distributed load is equal to $\md_2\doteq \frac{p_1}{1-p_1}\mu_1$.  

We can continue this analysis to find the distributions of $\ml_2,\ml_3, \ml_4\ldots$ and $\md_2,\md_3,\md_4\ldots$, along with $p_2,p_3,p_4,\ldots$. At each stage, $p_n$ denotes the probability that a node which is alive at stage $n$ will die.\emph{ We say that the cascade of failures stops if $\lim_{n\to\infty}p_n=0$, else we will say that the cascade continues indefinitely and will result in all the nodes failing. }

In the remaining portion of this section, we will compute these distributions and values for our specific choice of the initial distribution of loads and distribution of the initial disturbance. Restating our choices, we will assume that $\ml_0\sim \delta(l-a_0)$, and $\md_0\sim Exponential(d_m)$.

The distribution of $\ml_0+\md_0$ is:

\begin{eqnarray*}
f_{\ml_0+\md_0}(x)=\left\{\begin{array}{cl}
0 & x<a \\
\frac{1}{d_m}e^{-\frac{x-a_0}{d_{m}}} & a_0\leq x \\
\end{array} \right.
\end{eqnarray*}

$p_0$ can then be calculated as:
\begin{eqnarray*}
p_0 &=& Pr(\ml_0+\md_0\geq1) \\
&=& e^{-\frac{1-a_0}{d_m}}
\end{eqnarray*}

Let $\ml_1$ be the random variable with the same distribution as that of the loads at nodes which have not failed at the end of stage 0. The distribution of $\ml_1$, which is also the distribution of $L_{j_i}(0)$ is:
\begin{eqnarray*}
\ml_1 \sim f_{\ml_0+\md_0}(x|\ml_0+\md_0<1)=\left\{\begin{array}{cl}
0 & x<a_0 \\
\frac{1}{1-e^{-\frac{1-a_0}{d_{m}}}}\frac{1}{d_m}e^{-\frac{x-a_0}{d_{m}}} & a_0\leq x < 1 \\
0 & x\geq1\\
\end{array} \right.
\end{eqnarray*}

The distribution of load at a node which has failed at the end of stage 0, namely the load at node $L_{k_i}(0)$ is: 
\begin{eqnarray*}
f_{\ml_0+\md_0}(x|\ml_0+\md_0\geq1) & = & \frac{1}{d_{m}}e^{-\frac{x-1}{d_{m}}}, \quad  x \geq 1
\end{eqnarray*}
As discussed before, because we are interested in the case when the number of nodes is large, the total re-distributed load as a result of failures is a constant number. The total redistributed load at stage 1 is $\md_1=\frac{p_0}{1-p_0}\mu_0$, where $\mu_0=\mathbf{E}(L_{k_i}(0))=1+d_m$. 

The distribution of all the nodes which have survived the initial distribution is the distribution of the random variable $\ml_1+\md_1$, which is, 
\begin{eqnarray*}
f_{\ml_1+\md_1}(x)=f_{\ml_0+\md_0+\md_1}(x|\ml_0+\md_0<1) & = & \frac{1}{1-p_0}\frac{1}{d_{m}}e^{-\frac{x-a_0-\md_1}{d_{m}}}, \quad  a_0+\md_1\leq x < 1+\md_1.
\end{eqnarray*}

From the equation above, it is seen clearly that a small fraction of the total loads might have a load exceeding their capacity, and hence will fail. Let $\ml_2$ be the random variable with the same distribution as that of the loads at nodes which have not failed at the end of stage 1. The distribution of $\ml_2$ is,
\begin{eqnarray*}
\ml_2 &\sim& f_{\ml_0+\md_0+\md_1}(x|\ml_0+\md_0<1,\ml_0+\md_0+\md_1<1) \\
& = & \frac{1}{1-e^{-\frac{1-a_0-\md_1}{d_m}}}\frac{1}{d_{m}}e^{-\frac{x-a_0-\md_1}{d_{m}}}, \quad  a_0+\md_1\leq x < 1.
\end{eqnarray*}
Compare this to the distribution of $\ml_1$, they are similar. We will shortly develop a set of recursive equations which describe the distributions of all the subsequent random variables.

The probability that a node which has survived the initial disturbance will fail at the end of stage 0 because the re-distributed load caused its load to go beyond its capacity can be calculated as: 
\begin{eqnarray*}
p_1 &=& \int_{x=1}^{x+\md_1}f_{\ml_0+\md_0+\md_1}(x|\ml_0+\md_0<1) dx \\ 
&=& \frac{e^{-\frac{1-a_0}{d_m}}(e^{\frac{\md_1}{d_m}}-1)}{1-e^{-\frac{1-a_0}{d_m}}}.
\end{eqnarray*}

The mean of the loads at the nodes which fail at the end of first stage is, hence, 
\begin{eqnarray*}
\mu_1 &=& \int_{x=1}^{x+\md_1}xf_{\ml_0+\md_0+\md_1}(x|\ml_0+\md_0<1,\ml_0+\md_0+\md_1>1) dx \\ 
&=& 1+d_m-\frac{\md_1}{e^{\frac{\md_1}{d_m}}-1},
\end{eqnarray*}
and the total redistributed load at the end of stage 1 is $\md_2\doteq \frac{p_1}{1-p_1}\mu_1$. 

The evolution of this system can be written in the form of a set of recursive equations:
\begin{enumerate}
\item Initialize $p_0=e^{-\frac{1-a_0}{d_m}}, \md_1= \frac{p_0}{1-p_0}(1+d_m), a_1=a_0, p_1= \frac{e^{-\frac{1-a_1}{d_m}}}{1-e^{-\frac{1-a_1}{d_m}}}(e^{\frac{\md_1}{d_m}}-1)$.
\item For n from 2 to $N_{\textrm{iteration}}$, where $N_{\textrm{iteration}}$ is a large number, do:
 
If $(\md_{n-1}>(1-a_{n-1}))$ and $(a_{n-1}<1)$, break. Else,
	\begin{enumerate}
	\item $a_n=a_{n-1}+\md_{n-1}.$
	\item $\mu_{n-1}=1+d_m - \frac{\md_{n-1}}{e^{\frac{\md_{n-1}}{d_m}}-1}.$
	\item $\md_{n}=\frac{p_{n-1}}{1-p_{n-1}} \mu_{n-1}$.
	\item $p_{n}=  \frac{e^{-\frac{1-a_{n}}{d_m}}}{1-e^{-\frac{1-a_{n}}{d_m}}} (e^{\frac{\md_{n}}{d_m}}-1).$
        \end{enumerate}	
\end{enumerate}

We are interested in $\lim_{n\to\infty}p_n$, or equivalently $\lim_{n\to\infty}a_n$. If $\lim_{n\to\infty}p_n=0$, or equivalently, if $\lim_{n\to\infty}a_n<1$, then we can conclude that the system survives the outage, else, it would mean that the whole system goes into an outage. We wish to solve these recursive equations listed above in order to find the limits of these sequences, however, they are not tractable. We hence resort to simulations to verify the conditions under which the sequences converge. In Figure \ref{fig:exp_dcrit}, we plot the sequence $\{a_n\}$ for different values of $d_m$, for the initial value $a_0=0.8$. It is clearly seen that there is a critical value $d_{critical}=0.048$ below which the sequence always converges to a number less than 1, implying that the cascade has subsided. When $d_m>d_{critical}$, the sequence converges to 1, implying that the load at all the generators is greater than 1, and consequently a complete outage. In Figure \ref{fig:exp_p}, we plot the sequence $\{p_n\}$. As said before, the sequence either converges to 0, when $d_m<d_{critical}$, or to 1 when $d_m>d_{critical}$.

\begin{figure}[ht]
\centering
\includegraphics[width=0.6\columnwidth]{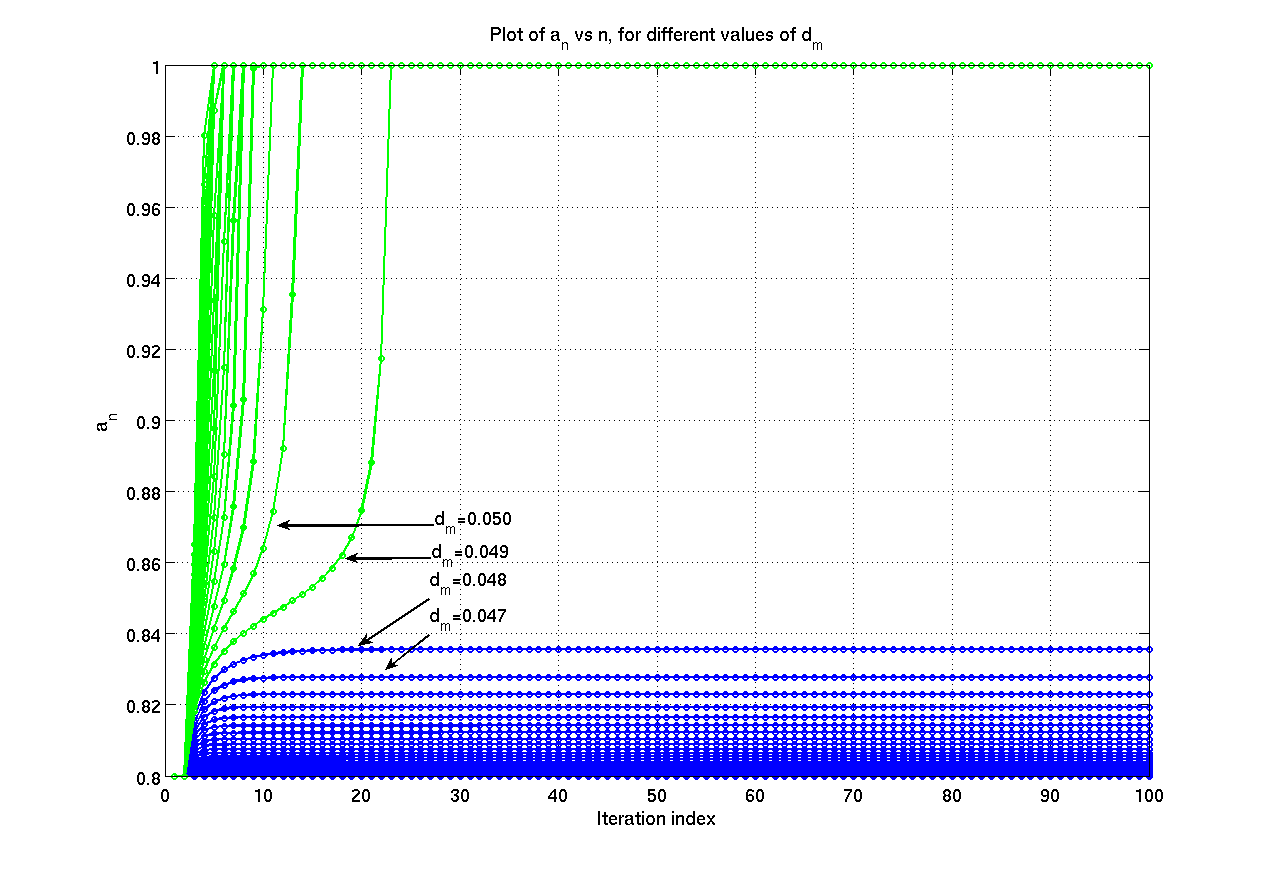}
\caption{Plot of $a_n$ vs $n$ for different values of $d_m$ in the range of 0.001 to 0.07 in increments of 0.001. Note that when $d_m\leq0.048$, $\{a_n\}$s converge to a value less than 1.}
\label{fig:exp_dcrit}
\end{figure}
 
\begin{figure}[ht]
\centering
\includegraphics[width=0.6\columnwidth]{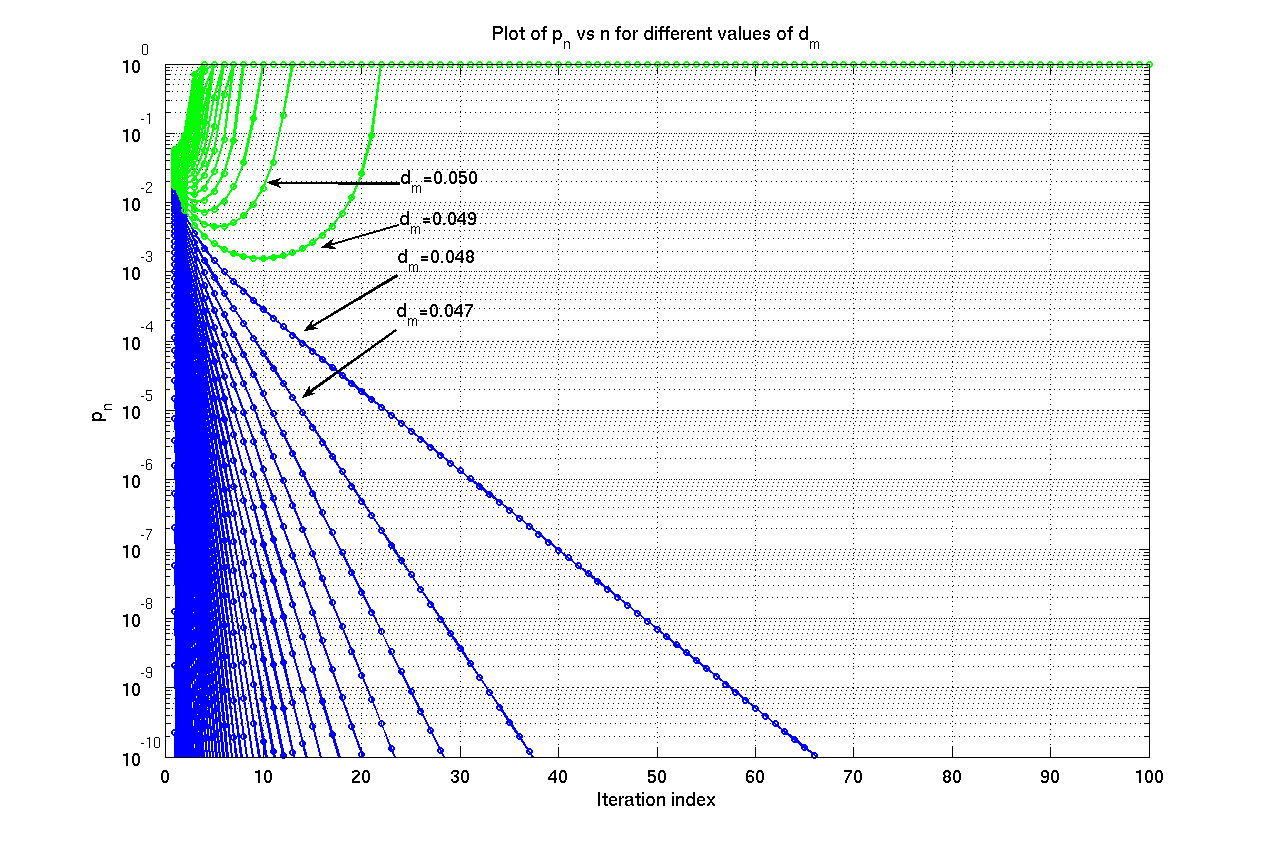}
\caption{Plot of $p_n$ vs $n$ for different values of $d_m$ in the range of 0.001 to 0.07 in increments of 0.001.}
\label{fig:exp_p}
\end{figure}

The theory developed so far can be used to gain further insight into the network resilience of large fully connected graphs. In Figure \ref{fig:exp_d_crit_a}, we plot, in blue, the value of $d_{critical}$ for different values of $a_0$. We also plot, in red, the excess capacity at each node, which is just $1-a_0$. For large values of $a_0$, $d_{critical}$ is about an order smaller than the disturbance the network was provisioned to handle.

\begin{figure}[ht]
\centering
\includegraphics[width=0.6\columnwidth]{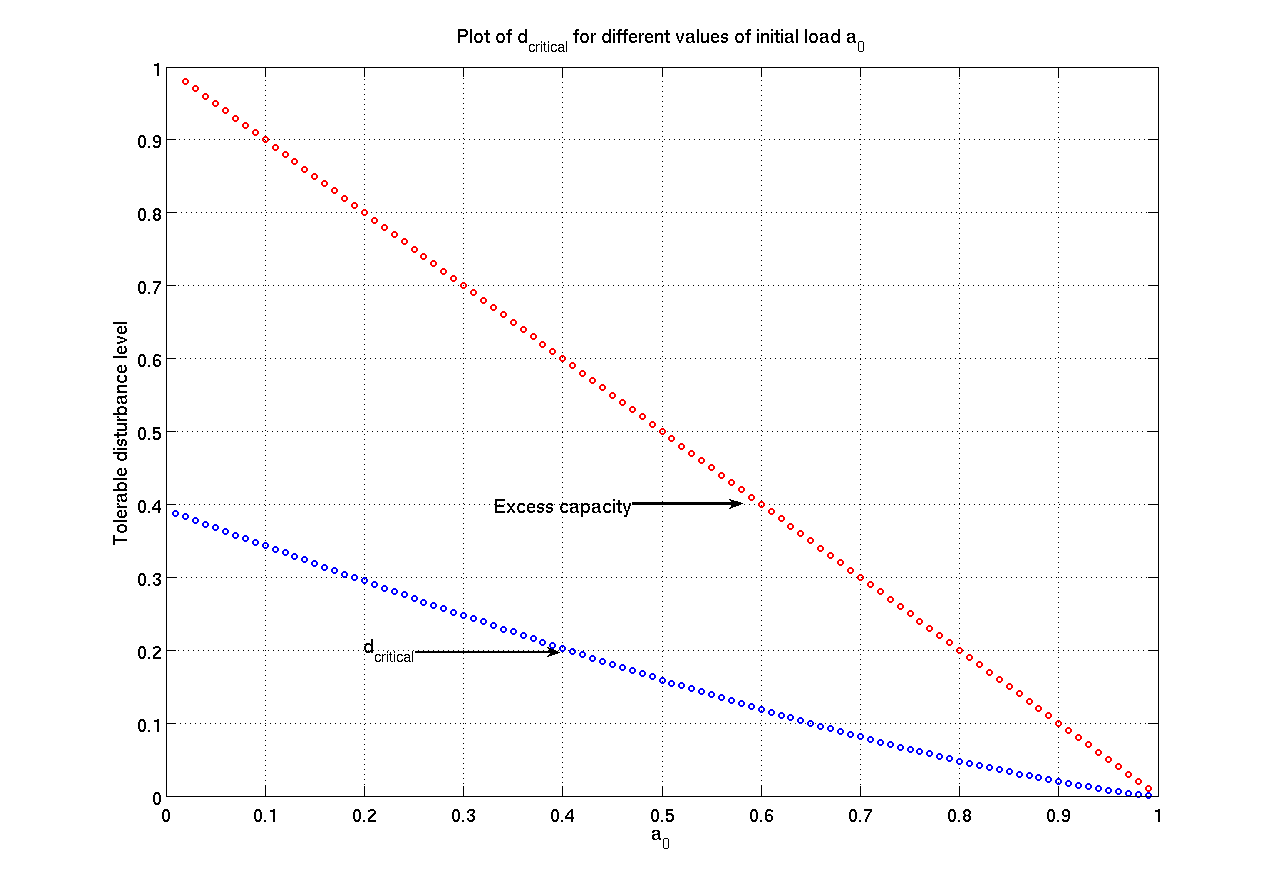}
\caption{Plot of $d_{critical}$ vs $a_0$.}
\label{fig:exp_d_crit_a}
\end{figure}

\section{Further Extensions}
We can perform a similar analysis of systems with different initial load distributions. Suppose that instead of all the nodes in the system having the same initial distribution $a_0$ at time 0, suppose that a fraction $p_a$ of the total nodes had a load of $a_0$, and the remaining fraction, $p_b=1-p_a$, had a load of $b_0$, then the recursive set of equations which describe the system would be:
\begin{enumerate}
\item Initialize $p_0=p_a e^{-\frac{1-a_0}{d_m}}+p_b e^{-\frac{1-b_0}{d_m}}, \md_1=\frac{p_0}{1-p_0}(1+d_m), a_1=a_0, b_1=b_0, p_1=\frac{p_a e^{-\frac{1-a_1}{d_m}}+p_b e^{-\frac{1-b_1}{d_m}}}{1-\left(p_a e^{-\frac{1-a_1}{d_m}}+p_b e^{-\frac{1-b_1}{d_m}}\right)}\left(e^{\frac{\md_1}{d_m}}-1\right).$
\item For n from 2 to $N_{\textrm{iteration}}$, where $N_{\textrm{iteration}}$ is a large number, do:
 
If $(\md_{n-1}<(1-b_{n-1}))$ and $(b_{n-1}<1)$:
	\begin{enumerate}
	\item $a_n=a_{n-1}+\md_{n-1}, b_n=b_{n-1}+\md_{n-1}.$
	\item $\mu_{n-1}=1+d_m - \frac{\md_{n-1}}{e^{\frac{\md_{n-1}}{d_m}}-1}.$
	\item $\md_{n}=\frac{p_{n-1}}{1-p_{n-1}} \mu_{n-1}$.
	\item $p_{n}=  \frac{p_a e^{-\frac{1-a_{n}}{d_m}}+p_b e^{-\frac{1-b_{n}}{d_m}} }{1-\left(p_a e^{-\frac{1-a_{n}}{d_m}}+p_b e^{-\frac{1-b_{n}}{d_m}}\right)} \left(e^{\frac{\md_{n}}{d_m}}-1\right).$
	\end{enumerate}	

Else if $((1-a_{n-1}) > \md_{n-1}\geq(1-b_{n-1}))$ and $(b_{n-1}<1)$:
	\begin{enumerate}
	\item $a_n=a_{n-1}+\md_{n-1}, b_n=1.$
	\item $\tilde{p} =p_a \left( e^{-\frac{1-a_{n-1}-\md_{n-1}}{d_m}} - e^{-\frac{1-a_{n-1}}{d_m}}\right)+p_b \left(1- e^{-\frac{1-b_{n-1}}{d_m}}\right)$

	$\mu_{n-1}=\frac{ p_a e^{-\frac{1-a_{n}}{d}}(1+d_m -(1+\md_{n-1}+d_m)e^{-\frac{\md_{n-1}}{d_m}} ) + p_b(b_{n-1}+\md_{n-1}+d_m -(1+\md_{n-1}+d_m)e^{-\frac{1-b_{n-1}}{d}} )}{\tilde{p}}$
	\item $\md_{n}=\frac{p_{n-1}}{1-p_{n-1}} \mu_{n-1}$.
	\item $p_{n}= 1- \frac{p_a (1- e^{-\frac{1-a_{n}}{d_m}})}{1-(p_a e^{-\frac{1-a_{n-1}}{d_m}} +p_b) }.$
	\end{enumerate}	

Else if $(\md_{n-1}<(1-a_{n-1}))$ and $(a_{n-1}<1)$ and $(b_{n-1}==1)$, 
	\begin{enumerate}
	\item $a_n=a_{n-1}+\md_{n-1}.$
	\item $\mu_{n-1}=1+d_m - \frac{\md_{n-1}}{e^{\frac{\md_{n-1}}{d_m}}-1}.$
	\item $\md_{n}=\frac{p_{n-1}}{1-p_{n-1}} \mu_{n-1}$.
	\item $p_{n}=  \frac{e^{-\frac{1-a_{n}}{d_m}}}{1-e^{-\frac{1-a_{n}}{d_m}}} (e^{\frac{\md_{n}}{d_m}}-1).$
        \end{enumerate}	
Else stop.
\end{enumerate}

In Figure \ref{fig:bimodal_p}, we plot the series $\{p_n\}$ for different values of $d_m$. For this simulation, we choose $a_0=0.5, b_0=0.9, p_a=0.25, p_b=0.75.$ The mean value of the initial load is $0.25*0.5+0.75*0.9=0.8$. There is again a critical value $d_{critical}=0.02$ for the mean of the initial disturbance below which the cascading subsides, and above which all the nodes fail with probability 1. Compare this to the results of the previous section where we considered the case when all the nodes in the network had the same initial load of 0.8. In that case, $d_{critical}$ was 0.049, more than double of what it is now. This leads to an interesting question, what configuration of initial loads at the generators is best in terms of network resilience. Is it best to have all the nodes share equally the initial load?   
\begin{figure}[ht]
\centering
\includegraphics[width=0.6\columnwidth]{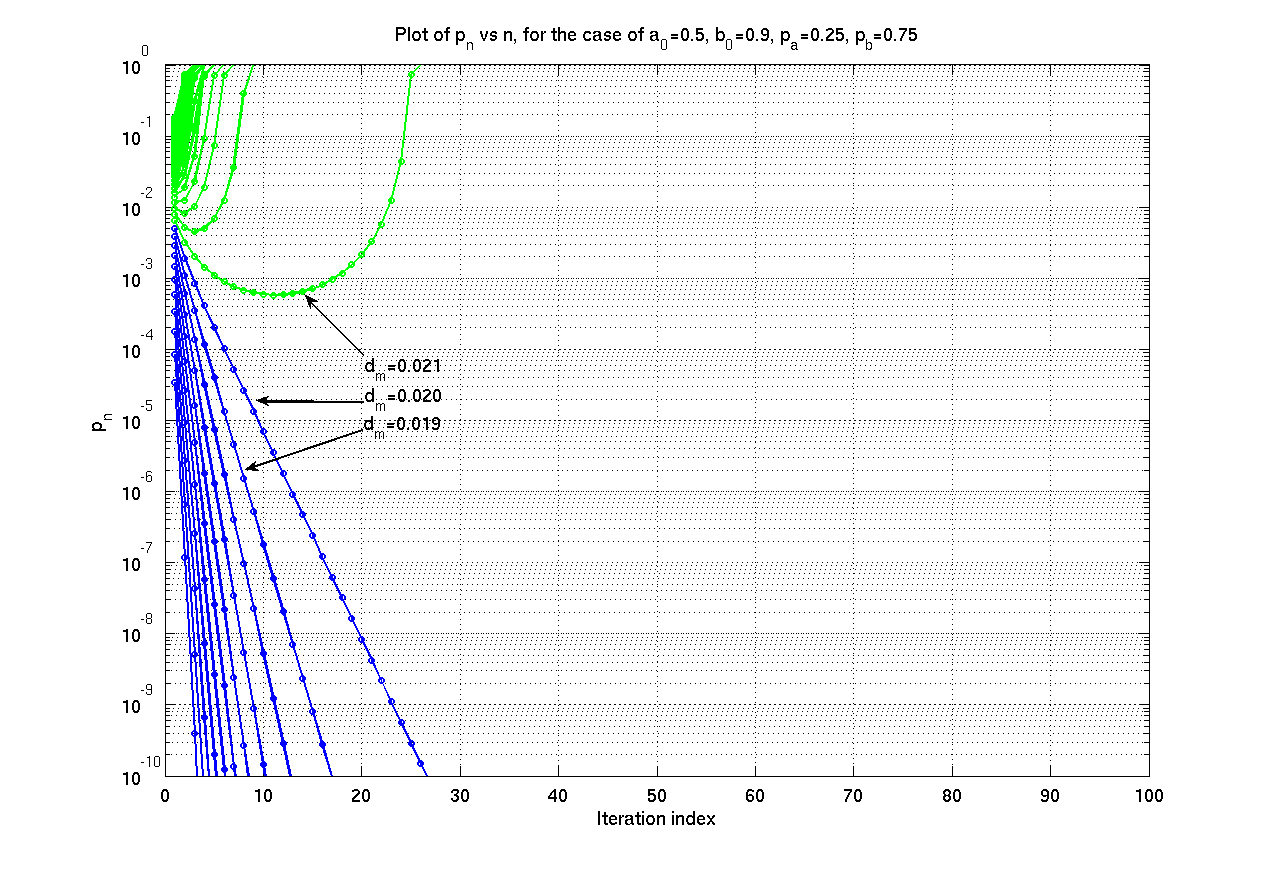}
\caption{Plot of $p_n$ vs $n$ when the initial distribution of the loads follows a bimodal distribution.}
\label{fig:bimodal_p}
\end{figure}

To verify this, we varied $a_0$ in the range $(0,0.8)$, and $b_0$ in the range $(0.8,1)$, and varied $p_a$ in a way such that we keep the mean $p_a a_0 + (1-p_a)b_0$ constant at $0.8$. For each pair of $(a_0,b_0)$, we found the critical value. In Figure \ref{fig:dcrit_a_b}, we plot $d_{critical}$. This simulation confirms our intuition that it is best to run all the generators at the same load. 

\begin{figure}[ht]
\centering
\includegraphics[width=0.6\columnwidth]{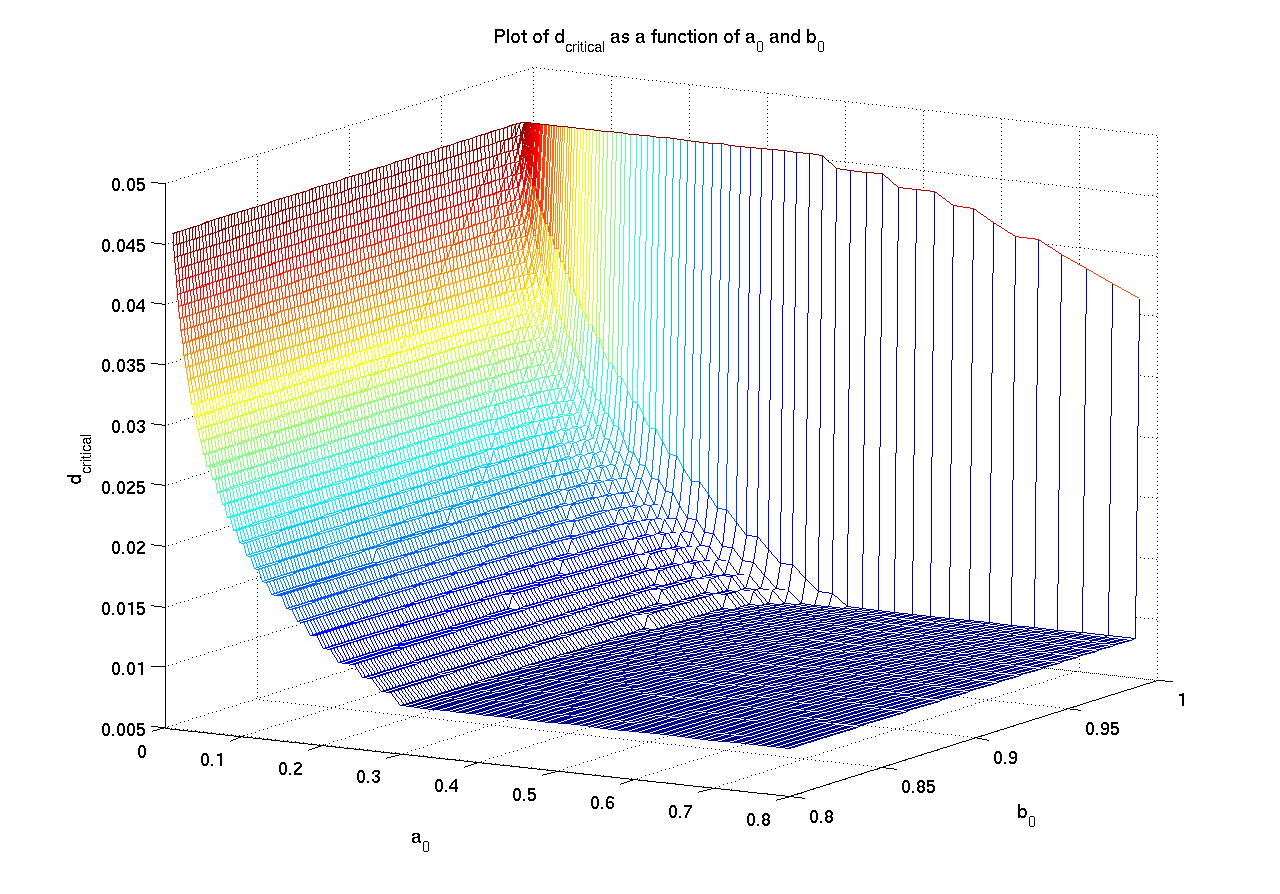}
\caption{Plot of $d_{critical}$ as a function of $a_0$ and $b_0$.}
\label{fig:dcrit_a_b}
\end{figure}

\appendix
\section{Description of the simulation}\label{app:simdesc}
We consider a network with $N$ nodes. We generate a random graph to connect these nodes. Any two nodes are connected with probability $p$, where $p\in[0,1]$ is a parameter we vary. The presence or absence of an edge is independent of the presence or absence of any other edge. Let $\textbf{A}=[a_{ij}]_{N\times N}$ be the adjacency matrix, where $a_{ij}=a_{ji}$ is 1 with probability $p$, 0 with probability 1-$p$. Define $\textbf{B}$ to be the \emph{normalized} adjacency matrix, defined as follows: $b_{ij}=a_{ij}/\sum_{i=1}^{N}a_{ij}$. The reason for defining this matrix is that the term $b_{ij}$ now represents the fraction of power transferred from node $i$ to node $j$ in the event that node $i$ fails. Note that the transpose of the normalized adjacency matrix, $B^T$, is a stochastic matrix, i.e.~all the entries of the matrix are non-negative, and the sum of elements along any row add up to 1. 

The load of node $i$ is initialized to $l_i$ distributed i.i.d.~$Unif([0,1])$. A disturbance $d_i$ is added to every node, an exponentially distributed random variable, $d_{i}$ distributed i.i.d~$Exponential(d_{mean})$. Let $\textbf{l}(0)=[l_i+d_i]_{N\times1}$ denote the load vector at time 0.

With this notation, to simulate a cascade, the following algorithm is used:
\begin{enumerate}
\item Initialize t=0.
\item Find the indices and number of nodes which fail in the current stage. Let $i_1,i_2,\ldots, i_K$ denote the indices of these nodes. 
\item Update $\textbf{A}$ by setting $a_{i_{k}i_{j}}$,$j,k\in\{1,2,\ldots,K\}$ to 0. Normalize $\textbf{A}$ to update $\textbf{B}$.
\item $\textbf{l}(t+1)=\textbf{l}(t)+\textbf{B}\times[0,\ldots,0,l_{i_1},0,\ldots,0,l_{i_2},\ldots,l_{i_K},0,\ldots,0]^T$.
\item Set $l_{i_{k}}(t+1),k\in\{1,2,\ldots,K\}$ to 0.
\item Update $\textbf{A}$ by setting $a_{i_{k}j}$ and $a_{ji_{k}}, j\in\{1,2,\ldots,N\}$, $k\in\{1,2,\ldots,K\}$ to 0. Normalize $\textbf{A}$ to update $\textbf{B}$.
\item Find the indices and number of nodes which fail in the next stage. If no new nodes fail, i.e., $\textbf{l}(t+1)<1$ (element-wise comparison), then increment $t$ and continue, else, increment $t$ and go to step 2.
\item Suppose at stage $T$, no nodes fail, compute $f=sum(\textbf{l}(T))/sum(\textbf{l}(0))$. Note that $T\leq N$. 
\item Repeat the simulation with different realizations of the random graph and different realizations of the initial load and initial disturbance. If $f_1,f_2,\ldots,f_M$ are the fractions computed from simulations $1,2,\ldots, M$, then $1-mean(f_i)$ gives the fraction of the population which goes into an outage, and is plotted in Figure \ref{fig:fraction_outage}. $mean(\mathcal{I}_{(f_i=1)})$, where $\mathcal{I}_{(\cdot)}$ is the indicator function\footnote{$\mathcal{I}_{(\mathcal{S})}$ is the indicator function evaluating to one whenever the statement $\mathcal{S}$ is true, and evaluates to 0 otherwise.}, gives the probability that that there is no outage in the system, and is plotted in Figure \ref{fig:p_outage}. The results of the individual experiment, $1-f_i$ are plotted in Figure \ref{fig:distribution}. 
\end{enumerate}

\section{Proof of lemma \ref{lem:slln}}\label{app:prooflem}
Let $X_i$ be the indicator function of whether node $i$ is dead. So, $X_1,X_2,\ldots,X_N$ are a set of i.i.d.~ Bernoulli random variables, taking the value 0 with probability $p_0$ and 1 with probability $1-p_0$. Then $N_0=\sum_{i=1}^{N}X_i$ is a Binomial random variable with parameters $N$ and $p_0$. By the strong law of large numbers, 
$$ \lim_{N\to \infty} \frac{1}{N}\sum_{i=1}^{N}X_i \stackrel{a.s.}{=} p_0 $$  
and so, $$ \lim_{N\to \infty} \frac{\frac{N_0}{N}}{1-\frac{N_0}{N}} \stackrel{a.s.}{=} \frac{p_0}{1-p_0}, $$
where $a.s.$ stands for almost sure convergence. Next we need to find the distribution of $S_{N_0}=\frac{1}{N-N_0}(L_{k_1}(0)+L_{k_2}(0)+\ldots+L_{k_{N_0}}(0))$. $N_0\stackrel{a.s.}{=}\sum_{i=1}^{N}X_i$ is a Binomial distributed random variable with mean $Np_0$ and variance $Np_0(1-p_0)$, and by central limit theorem, is equivalent in distribution to a Gaussian distribution of the same mean and variance. We can hence bound the probability of it being greater than any fixed number $N^{'}$, 
\begin{eqnarray*}
Pr(N_0 \geq N^{'}) & = & 1 - Q\left(\frac{Np_0-N^{'}}{\sqrt{Np_0(1-p_0)}}\right) \\
&>& 1- \frac{1}{\sqrt{2\pi}} \frac{\sqrt{Np_0(1-p_0)}}{Np_0-N^{'}} e^{-\frac{1}{2}\left(\frac{Np_0-N^{'}}{\sqrt{Np_0(1-p_0)}}\right)^2} \\
\lim_{N\to \infty} Pr(N_0 \geq N^{'}) &>& 1- \lim_{N \to \infty} \frac{1}{\sqrt{2\pi}} \frac{\sqrt{p_0(1-p_0)}}{\sqrt{N}(p_0-\frac{N^{'}}{N})} e^{-\frac{1}{2}\left(\frac{\sqrt{N}(p_0-\frac{N^{'}}{N})}{\sqrt{p_0(1-p_0)}}\right)^2} \\
&>& 1,
\end{eqnarray*}
where $Q(\cdot)$ is the Gaussian tail distribution. Hence, with probability 1, $N_0$ is a large number, we can make use of the weak law of large numbers to state that $$\lim_{N\to \infty} \frac{L_{k_1}(0)+L_{k_2}(0)+\ldots+L_{k_{N_0}}(0)}{N_0} \stackrel{d}{=} \mathbf{E}(L_{k_{i}}) = \mu_0, $$
where $d$ denotes convergence in distribution. Hence, $$\lim_{N\to \infty}S_{N_0}= \lim_{N\to \infty} \frac{1}{N-N_0}(L_{k_1}(0)+L_{k_2}(0)+\ldots+L_{k_{N_0}}(0)) \stackrel{d}{=} \frac{p_0}{1-p_0}\mu_0, \mathrm{ if } \mu_0<\infty$$
a constant number.
\qed
\bibliographystyle{IEEEtran}
\bibliography{citations}

\end{document}